\newtheoremstyle{dotless}{}{}{\itshape}{}{\bfseries}{}{ }{}
\newcommand{\Tr}{\mathrm{ Tr }}
\newcommand{\tn}[1]{\textnormal{#1}}
\newcommand{\be}{\begin{equation}}
\newcommand{\ee}{\end{equation}}
\newcommand{\sket}[1]{{\ensuremath{\lvert#1\rangle}}}
\newcommand{\lket}[1]{{\ensuremath{\left\lvert#1\right\rangle}}}
\newcommand{\ket}[1]{\if@display\lket{#1}\else\sket{#1}\fi}
\newcommand{\sbra}[1]{{\ensuremath{\langle#1\rvert}}}
\newcommand{\lbra}[1]{{\ensuremath{\left\langle#1\right\rvert}}}
\newcommand{\bra}[1]{\if@display\lbra{#1}\else\sbra{#1}\fi}
\newcommand{\sbraket}[2]{{\ensuremath{\langle#1\rvert#2\rangle}}}
\newcommand{\lbraket}[2]{{\ensuremath{\left\langle#1\!\left\rvert\vphantom{#1}#2\right.\!\right\rangle}}}
\newcommand{\braket}[2]{\if@display\lbraket{#1}{#2}\else\sbraket{#1}{#2}\fi}
\newcommand{\sketbra}[2]{{\ensuremath{\lvert #1\rangle\!\langle #2\rvert}}}
\newcommand{\lketbra}[2]{{\ensuremath{\left\lvert #1\right\rangle\!\!\left\langle #2\right\rvert}}}
\newcommand{\ketbra}[2]{\if@display\lketbra{#1}{#2}\else\sketbra{#1}{#2}\fi}
\newcommand{\proj}[1]{\ketbra{#1}{#1}}
\DeclareMathOperator{\tr}{tr}
\newcommand{\strace}[2][@]{\ensuremath{\tr\ifthenelse{\equal{#1}{@}}{}{_{#1}}(#2)}}
\newcommand{\ltrace}[2][@]{\ensuremath{\tr\ifthenelse{\equal{#1}{@}}{}{_{#1}}\left(#2\right)}}
\newcommand{\ktrace}[2][@]{\ensuremath{\tr\ifthenelse{\equal{#1}{@}}{}{_{#1}}\left[#2\right]}}
\newcommand{\trace}[2][@]{\if@display\ltrace[#1]{#2}\else\strace[#1]{#2}\fi}
\theoremstyle{dotless}
\newtheorem{thm}{Theorem}
\theoremstyle{definition}
\begin{document}
\title{Optimality of semiquantum nonlocality \\in the presence of high inconclusive rates }
\author{Charles Ci Wen \surname{Lim}}
\email{limc@ornl.gov}
\affiliation{ Quantum Information Science Group, Computational Sciences and Engineering Division, Oak Ridge National Laboratory, Oak Ridge, TN 37831-6418, USA.}

\begin{abstract}
Quantum nonlocality is a counterintuitive phenomenon that lies beyond the purview of causal influences.~Recently, Bell inequalities have been generalized to the case of quantum inputs, leading to a powerful family of semi-quantum Bell inequalities that are capable of detecting any entangled state.~Here, we focus on a different problem and investigate how the local-indistinguishability of quantum inputs and postselection may affect the requirements to detect semi-quantum nonlocality.~To this end, we consider a semi-quantum nonlocal game based on locally-indistinguishable qubit inputs and derive its postselected local and quantum bounds by using a novel connection to the local-distinguishability of quantum states.~Interestingly, we find that the postselected local bound is independent of the measurement efficiency and that the Bell violation increases with lower measurement efficiencies. 
\end{abstract} 
\maketitle

It is known that in quantum physics, there exist experiments in which correlations from measurements on entangled systems are at odds with our causal world views.~These correlations may be verified by using a family of statistical tests called Bell inequalities~\cite{Bell1964, Brunner2014RMP}, which are linear constraints on the set of correlations that are compatible with the principle of local causes~\cite{footnote0}.~In other words, if the correlations violate a Bell inequality, then the underlying physics must be nonlocal in nature.~Remarkably, apart from their foundational significance, Bell inequalities have also found practical applications in quantum cryptography and quantum state estimation~\cite{Acin2007,Pironio2010,Lim2013, Reichardt2013,Vazirani2014,Yang2014}.~For these reasons, quantum nonlocality is one of the most widely studied topics in quantum information science. 

Recently, a new paradigm called \emph{semi-quantum nonlocality} has emerged~\cite{Buscemi2012}, where observers use quantum inputs---instead of classical inputs---to specify their desired measurement settings.~Interestingly, by doing so, all entangled states are ``nonlocal'', in that for any entangled state there is always a semi-quantum Bell inequality with which violation is achieved.~This feature suggests that certain semi-quantum Bell inequalities are strong entanglement witnesses and thus could provide an unprecedented level of confidence in detecting entanglement using untrusted measurement devices.~For instance, see Ref.~\cite{Branciard2013} for a generic procedure that converts entanglement witnesses into measurement-device-independent entanglement witnesses, and Ref.~\cite{Xu2014} for the corresponding proof-of-principle experiment.~See also Refs.~\cite{Cavalcanti2013,Kocsis2015} for the connection to quantum steering~\cite{Wiseman2007}.

On a more general level, semi-quantum nonlocality admits the possibility of working with \emph{locally-indistinguishable} quantum inputs, a notion that is central to local quantum state discrimination~\cite{Peres1991, Bennett1999,Cosentino2013} and quantum data hiding~\cite{Terhal2001, DiVincenzo2002}.~For our purposes, we define such quantum inputs as quantum states that are indistinguishable at the level of local operations and shared randomness (LOSR)~\cite{Buscemi2012}, but distinguishable at the level of local quantum measurements assisted with shared entanglement (henceforth referred to as \emph{quantum strategies}).~In particular, our theoretical contribution recognizes that semi-quantum Bell inequalities using locally-indistinguishable quantum inputs can acquire the following two interesting properties:~(1) the ability to safely perform postselection and (2) the ability to achieve higher Bell violations with decreasing measurement efficiencies~\cite{footnote1}. 

The first property is based on the fact that postselection strategies due to the detection loophole~\cite{Eberhard1993, Berry2010, Branciard2011} are local filtering processes assisted with shared randomness.~Thus by the above definition, it is impossible for LOSR models to produce postselected correlations that are semi-quantum nonlocal---even if arbitrarily low measurement efficiencies are allowed.~The second property is due to the fact that the violation of a semi-quantum Bell inequality is directly related to the local-distinguishability of the quantum input states.~This connection implies that with a suitable choice of quantum inputs, it is possible to devise a semi-quantum Bell inequality whose optimal violation is achieved only if the measurement efficiencies fall below a certain threshold;~this is analogous to the optimal discrimination of non-orthogonal quantum states whereby inconclusive measurement elements are necessary~\cite{Barnett2009}. 
 
To illustrate the above properties, we analyze a semi-quantum Bell experiment inspired by the Clauser-Horne-Shimony-Holt~(CHSH) Bell experiment~\cite{CHSH1969}, and derive its postselected local bound and postselected maximum quantum bound for a given measurement efficiency.~To start with, let us first clarify the meaning of using quantum states to choose the measurements.~While it is clear what it means by classically choosing a measurement setting (e.g., turning a knob), in the case of quantum inputs, the notion of choosing a measurement is somewhat less obvious.~To sharpen this notion, we propose to think in terms of programmable quantum measurement (PQM) devices~\cite{Nielsen1997, Fiurasek2002}.~More specifically, a PQM device is a measurement device that accepts two quantum inputs, namely a quantum target system and a quantum program system, and then performs a measurement (determined by the state of the program system) on the target system.~In other words, one uses the state of the quantum program system to choose the desired measurement.~Therefore, we may view measurements in the semi-quantum nonlocality framework as \emph{untrusted} PQM devices whose measurements are purportedly determined by \emph{trusted}  quantum input systems, i.e., see Fig.~(\ref{fig1}).\\

{\textbf{Semi-quantum CHSH inequality.}~We consider a semi-quantum Bell experiment involving two distant observers, called Alice and Bob,~who each have a trusted local source of randomness, a trusted qubit preparation device, and an untrusted PQM device.~Note that the measurement-independence condition~\cite{{Barrett2011}} is thus implicitly assumed.~In each run of the experiment, Alice generates two random bits $\bar{x}=x_1x_2$ and prepares a program qubit using the following encoding scheme:~$\proj{\bar{x}}=H^{x_1}\proj{x_2}H^{x_1}$, where $\{\ket{x_2}\}_{x_2=0,1}$ is the computational basis and $H$ is the Hadamard matrix.~Then, she sends the prepared qubit to her PQM device for measurement and receives an outcome $a\in\{0,1,\varnothing\}$, where all inconclusive outcomes are assigned to $\varnothing$.~Likewise for Bob, we write $\bar{y}=y_1y_2$ and $b$ to denote his measurement choice and measurement outcome, respectively.~Furthermore, in what follows, we will refer to Alice's and Bob's qubit input systems as $\mathsf{A}$ and $\mathsf{B}$, respectively, and their corresponding quantum target systems as $\mathsf{A}'$ and $\mathsf{B}'$.

In the LOSR framework, untrusted measurements are modeled by a classical distribution $\{\Pr[\lambda]\}_\lambda$ and a corresponding set of conditional local positive-operator valued measure (POVM) operators,  $\{Q^\lambda_a\}_{a}$, $\{R^\lambda_b\}_{b}$, acting on systems $\mathsf{A}$ and $\mathsf{B}$, respectively.~Here, the classical variable $\lambda$ is a diagonal quantum state living in the Hilbert space of $\mathsf{A}'\otimes \mathsf{B}'$, and thus captures all the classical randomness that is pre-shared between the two measurement devices.~For a given pair of measurement choices,~$\omega_{\bar{x}}:=\proj{\bar{x}}$ and $\tau_{\bar{y}}:=\proj{\bar{y}}$, the conditional probability of observing outcomes $a$ and $b$ is given as 
\be \label{Eq1_locality}
\Pr\left[ a,b|\bar{x},\bar{y}\right]=\sum_\lambda  \Pr[\lambda]\Tr\left[{Q}_{a}^\lambda \omega_{\bar{x}}\right] \Tr \left[{R}_b^\lambda  \tau_{\bar{y}} \right],
\ee
which is synonymous to the locality condition assumed in standard Bell inequalities.~Also, we write $\{M_{a,b}\}_{a,b}$ to denote the effective two-qubit measurement acting on the qubit inputs, i.e., $M_{a,b} = \sum_{\lambda} \Pr[\lambda] {Q}_a^\lambda \otimes {R}_b^\lambda$.~Note that if $M_{a,b}$ is not separable for some $a,b$, then by definition the joint target state must be entangled, i.e., see Fig.~(\ref{fig1}).~Accordingly, a violation of Eq.~(\ref{Eq1_locality}) implies that the local PQM devices must share entanglement.

\begin{figure}[t]
\includegraphics[width=73mm]{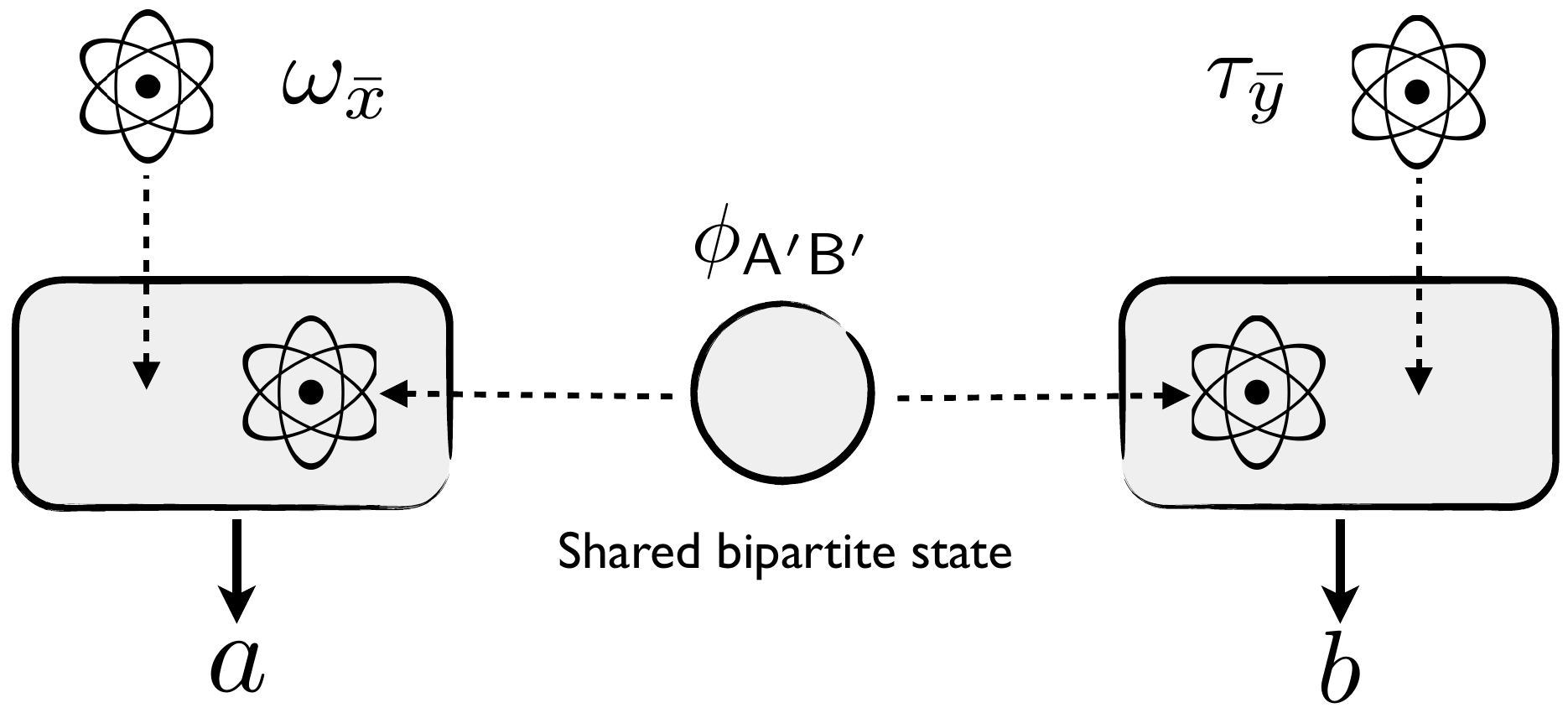}\caption{\textbf{Operational interpretation.}~Alice's and Bob's measurement choices are encoded into trusted qubit systems and then sent to their respective untrusted PQM devices. The PQM devices share a bipartite state (denoted by $\phi_{\mathsf{A'B'}}$) which may or may not be entangled. To test for entanglement, Alice and Bob compute Eq.~(\ref{Eq2_Ieq_LHS}): if the inequality is violated, they conclude $\phi_{\mathsf{A'B'}}$ is entangled, otherwise, the experiment is not conclusive.~It is useful to mention that like standard Bell experiments, the PQM devices and the source device are all part of the test.    }   
\label{fig1}
\end{figure}
Following standard arguments~\cite{Eberhard1993, Branciard2011}, we suppose $\sum_{a \not = \varnothing}\Pr\left[a |\bar{x} \right] = \gamma$, $\sum_{b \not = \varnothing}\Pr\left[b|\bar{y} \right]= \gamma$, and $\sum_{a,b \not = \varnothing}  \linebreak \Pr\left[a,b |\bar{x},\bar{y} \right] = \gamma^2$~for all measurements choices, where $\gamma \in (0,1]$ is the measurement efficiency.~With that, our postselected inequality reads
\be \label{Eq2_Ieq_LHS}
S(\gamma)=\frac{1}{4}\sum_{\bar{x},\bar{y}}(-1)^{f(\bar{x},\bar{y})}\frac{C(\bar{x},\bar{y})}{\gamma^2} \leq \beta(\gamma| \tn{LOSR}) ,
\ee
where $f(\bar{x},\bar{y}):=x_1 \wedge y_1 \oplus x_2 \oplus y_2$ is a balanced boolean function, and $C(\bar{x},\bar{y}):=\Pr[a=b|\bar{x},\bar{y}]-\Pr[a\not=b|\bar{x},\bar{y}]$ for $a,b\not= \varnothing$ is the conditional correlation function. Our goal is to derive the postselected local bound, $\beta(\gamma| \tn{LOSR})$, and to see how it scales with the measurement efficiency, $\gamma$.

For pedagogical reasons, we first discuss what happens when the inputs are classical. In this picture, our inequality can be seen as a symmetric extension of the CHSH inequality.~To see this connection, we note that the first bit of each party, $x_1,y_1$, determines his or her measurement setting, and the second bit, $x_2,y_2$, determines if he or she should flip the measurement outcome.~Indeed, it can be easily verified that Eq.~(\ref{Eq2_Ieq_LHS}) is an average of four CHSH inequalities conditioned on $x_2$ and $y_2$, therefore the local bound of our inequality assuming classical inputs is 2.~However, despite these similarities, there is a subtle difference between the CHSH inequality and Eq.~(\ref{Eq2_Ieq_LHS}) with respect to classical local models.~That is, a classical local model that outputs fixed correlated outcomes independently of the inputs would give a CHSH value of 2, whereas with Eq.~(\ref{Eq2_Ieq_LHS}) the Bell value is zero. This example illustrates that the additional randomness injected via $x_2$ and $y_2$ plays an interesting role in constraining the efficacy of certain classical local models. 

Moving on to quantum inputs, the measurement basis is now determined by the basis in which the program qubit is prepared, and the bit flip value is given by the eigenvector of that basis.~Notice that this encoding scheme is inspired by the celebrated quantum conjugate coding scheme used in quantum cryptography~\cite{Wiesner1983,BB84}.~The main advantage of this scheme is that the probability of learning each bit is upper bounded by $(1+1/\sqrt{2})/2 \approx 0.853$~\cite{Ambainis1999}. Based on these observations, we thus expect correlations generated by LOSR models to  be weakly correlated with Alice's and Bob's measurement choices. 

Recall that we want to derive the postselected local bound and the postselected maximum quantum bound for~Eq.~(\ref{Eq2_Ieq_LHS}). As mentioned earlier, the former is denoted by $\beta(\gamma|\tn{LOSR})$ and is defined as the maximization of $S(\gamma)$ over all LOSR measurements for a fixed measurement efficiency $\gamma$. At this point, it is useful to mention that all postselection strategies conceivable by LOSR models are automatically accounted for in the maximization.~That is, any postselection strategy employed by the underlying LOSR model must be captured by the local \emph{filtering} POVMs ${Q}_0^\lambda+{Q}_1^\lambda$ and ${R}_0^\lambda+{R}_1^\lambda$, which are also optimized as part of the maximization together with the distribution $\{\Pr[\lambda]\}_\lambda$.~Moving on, the postselected maximum quantum bound is denoted by $\beta(\gamma)$ and is defined as the maximization of $S(\gamma)$ over the set of quantum strategies, $\{\phi_{\mathsf{A'B'}},\{Q_a\}_a,\{R_b\}_b\}$.\\

{\textbf{Connection to quantum state discrimination.}} The above maximization problems can be solved by using a connection to the local-distinguishability of quantum inputs.~To illustrate this connection, we first note that the proposed semi-quantum CHSH experiment is equivalent to a guessing game in which the \emph{untrusted} local measurement devices have to guess the bit value $f(\bar{x},\bar{y})$ when given quantum inputs $\omega_{\bar{x}}\otimes \tau_{\bar{y}}$.~More precisely, the devices win the game if they output $a\oplus b=f(\bar{x},\bar{y})$ whenever $a,b\not= \varnothing$, i.e., the game is counted only for jointly conclusive events.~The conditional guessing probability can be written in terms of Eq.~(\ref{Eq2_Ieq_LHS}),
\be \label{Eq3_guessing}
G(\gamma):=\frac{\Pr\left[a\oplus b = f(\bar{x},\bar{y}) \right]}{\gamma^2}=\frac{1}{2}+\frac{S(\gamma)}{8},  \ee
where $S(\gamma)/8$ can be seen as the distinguishing advantage.~Then, it can be easily verified that 
\be  \label{Eq4_obj}
\Pr\left[a\oplus b = f(\bar{x},\bar{y}) \right]=\frac{\Tr\left[\rho_0\Pi_{a \oplus b =0}+\rho_1\Pi_{a \oplus b =1} \right]}{2},
\ee
where we used
\[
\rho_{0}=\frac{1}{8}\!\!\!\!\!\!\!\sum_{\substack{\bar{x},\bar{y}\\\tn{s.t.}f(\bar{x},\bar{y})=0}}\!\!\!\!\!\!\!\omega_{\bar{x}}\otimes \tau_{\bar{y}},\quad \, \rho_{1}=\frac{1}{8}\!\!\!\!\!\!\!\sum_{\substack{\bar{x},\bar{y}\\\tn{s.t.}f(\bar{x},\bar{y})=1}}\!\!\!\!\!\!\!\omega_{\bar{x}}\otimes\tau_{\bar{y}},\] and the measurement assignments $\Pi_{a\oplus b=0}=M_{0,0}+M_{1,1}$, $\Pi_{a\oplus b=1}=M_{0,1}+M_{0,1}$ and $\Pi_{\varnothing}=\mathds{1}-\Pi_{a\oplus b=0}-\Pi_{a\oplus b=1}$.

We may interpret Eq.~(\ref{Eq4_obj}) as follows. In each run of the experiment, the measurement devices are given a product quantum state $\omega_{\bar{x}}\otimes \tau_{\bar{y}}$ randomly chosen from one of the two sets of states, $\{\omega_{\bar{x}}\otimes \tau_{\bar{y}}: f(\bar{x},\bar{y})=0\}$ and $\{\omega_{\bar{x}}\otimes \tau_{\bar{y}}: f(\bar{x},\bar{y})=1\}$, and the devices have to guess which set the given state is drawn from.~In other words, the local devices have to collectively guess the global identity $f(\bar{x},\bar{y})$ of $\omega_{\bar{x}}\otimes \tau_{\bar{y}}$ using whatever resources they are given with.~Indeed, the figure of merit in this case is exactly given by Eq.~(\ref{Eq3_guessing}), which is the conditional guessing probability uniformly averaged over all product states.~Using the Born's rule and the linearity of the trace operator, this guessing game can be simplified to the local-distinguishability of two non-orthogonal mixed states $\rho_0$ and $\rho_1$ assuming a fixed conclusive rate of $\gamma^2$. Therefore, the maximization Eq.~(\ref{Eq2_Ieq_LHS}) is equivalent to the maximization of Eq.~(\ref{Eq4_obj}) (up to the constant normalization factor of $1/\gamma^2$). 

The advantage of local-distinguishability games is that they can be analytically solved through semidefinite programming~\cite{Cosentino2013}, a form of convex optimization that maximizes a linear function over the intersection of a semidefinite cone and an affine plane~\cite{SDP1996}.~For brevity, we present only the primal programs and defer the corresponding dual programs and optimal solutions to the Supplementary Material.~The primal program for computing the maximum quantum guessing probability assuming a fixed $\gamma^2 \in (0,1]$ is given by
\begin{eqnarray*}\nonumber
\texttt{maximize}&:& \frac{1}{2}\Tr \left[\rho_0  \Pi_{a\oplus b=0}  + \rho_1 \Pi_{a\oplus b=1} \right] \\ \nonumber
\texttt{subject to}&:&  \Pi_{a\oplus b=0}+ \Pi_{a\oplus b=1}+ \Pi_{\varnothing} = \mathds{1}_{\mathsf{A} \otimes \mathsf{B}},\\  \nonumber
&& \Tr\left[ (\omega_{\bar{x}}\otimes \tau_{\bar{y}}\Pi_{\varnothing}\right] =1-\gamma^2,\quad \forall~\bar{x},\bar{y} \\
&& \Pi_i \succeq 0,\quad i=0,1,\varnothing,
\end{eqnarray*}
and the optimal values are found to be
\be \label{Eq5_quantumbound}
\tn{max}\, G(\gamma)  =\left\{\begin{array}{lll}
             \frac{1}{2}\left(1+\frac{1}{\gamma^22\sqrt{2}}\right)\quad &\tn{if}&\quad \gamma>\frac{1}{\sqrt{2}}\\
             \frac{1}{2}+\frac{1}{2\sqrt{2}} \quad &\tn{if}&\quad \gamma\leq\frac{1}{\sqrt{2}}
            \end{array}\right..
\ee
Here, an important remark is in order.~These optimal values are obtained over the whole set of two-qubit POVMs acting on $\mathsf{A}\otimes\mathsf{B}$, which is larger than the set of quantum strategies, i.e., $M_{a,b}=\Tr_{\mathsf{A}'\mathsf{B}'}\left[ \phi_{\mathsf{A'B'}}(Q_a \otimes R_b)\right]$.~Thus strictly speaking, Eq.~(\ref{Eq5_quantumbound}) is an upper bound on the maximum quantum bound, i.e., $\beta(\gamma)\leq 2\sqrt{2}$ for $0<\gamma \leq 1/\sqrt{2}$ and $\beta(\gamma)\leq \sqrt{2}/\gamma^2$ for $1/\sqrt{2}<\gamma \leq 1$. However, as we will see later, this upper bound is tight for $\gamma \in (0,1/2]$.

Similarly, the maximization for LOSR measurements is based on a circuitous method, which nevertheless also leads to a tight upper bound on $G(\gamma|\tn{LOSR})$.~More precisely, we optimize over all measurements compatible with the positive partial transpose (PPT) condition~\cite{HorodeckiPPT} instead of LOSR measurements.~The reason is that PPT measurements admit a much simpler characterization and can be formulated as linear constraints in the semidefinite programs, i.e., we only need to add $ \Pi_i^{T_\mathsf{B}} \succeq 0$, for $i=0,1,\varnothing$, where $T_\mathsf{B}$ means the partial transpose with respect to Bob's measurements.~Moreover, we use the fact that PPT and separable measurements are equivalent at the level of two-qubit positive operators~\cite{HorodeckiPPT}.~Therefore, the optimal bound for PPT measurements is an upper bound on that of LOSR measurements, i.e., $\beta(\gamma|\tn{LOSR}) \leq \beta(\gamma|\tn{Sep}) = \beta(\gamma|\tn{PPT})$.~The optimal value for PPT models is found to be independent of $\gamma$, 
\be \label{Eq6_PPTbound}
\tn{max}\, G( \cdot|\tn{PPT}) =\frac{1}{2}+\frac{1}{4\sqrt{2}}.
\ee 
Interestingly, it turns out that the optimal measurements are given by LOSR measurements.~To show this, suppose the qubit inputs are given by $\omega_{0x_2}= (\mathds{1}_{\mathsf{A}}+(-1)^{x_2}\mathbbm{X})/2$, $\omega_{1x_2}= (\mathds{1}_{\mathsf{A}}+(-1)^{x_2}\mathbbm{Y})/2$, and $\tau_{y_1y_2}= (\mathds{1}_{\mathsf{B}}+(-1)^{y_2}(\mathbbm{X}+(-1)^{y_1}\mathbbm{Y})/\sqrt{2})/2$, where $\mathbbm{X}$ and $\mathbbm{Y}$ are Pauli matrices~\cite{footnote2}.~Then, it can be verified that the joint input states are jointly diagonal in the standard Bell basis:

 \be 
 \label{Eq7_diag_basis} \rho_0=   \begin{pmatrix}
    \alpha^+ & 0 & 0 & 0 \\
    0 &\alpha^- & 0 & 0 \\
    0 & 0& \frac{1}{4}& 0 \\
    0 & 0 & 0 & \frac{1}{4} 
  \end{pmatrix}, \quad \rho_1=   \begin{pmatrix}
    \alpha^- & 0 & 0 & 0 \\
    0 & \alpha^+ & 0 & 0 \\
    0 & 0& \frac{1}{4}& 0 \\
    0 & 0 & 0 & \frac{1}{4}
  \end{pmatrix},
 \ee where the eigenvalues are $\alpha^\pm:=(1\pm1/\sqrt{2})/4$, and the corresponding eigenvectors are ordered as: $\ket{\Psi^+}$, $\ket{\Psi^-}$, $\ket{\Phi^+}$ and $\ket{\Phi^-}$~\cite{footnote3}.~For example, we have $\bra{\Psi^+} \rho_0 \ket{\Psi^+}=\bra{\Psi^-} \rho_1 \ket{\Psi^-}=\alpha^+$.~A simple LOSR measurement that achieves Eq.~(\ref{Eq6_PPTbound}) is one that uses only local measurements, i.e., no shared randomness is needed.~More specifically, the strategy is $Q_a=\gamma(\mathds{1}_{\mathsf{A}}+(-1)^{a}\mathbbm{X})/2$, $R_b=\gamma(\mathds{1}_{\mathsf{B}}+(-1)^{b}\mathbbm{X})/2$ for $a,b=0,1$, and $Q_\varnothing=(1-\gamma)\mathds{1}_{\mathsf{A}}$, $R_\varnothing=(1-\gamma)\mathds{1}_{\mathsf{B}}$ for the inconclusive outcomes.~That is, each measurement device with probability $\gamma$ measures in the $\mathbbm{X}$ basis, and with probability $1-\gamma$ outputs $\varnothing$ without measurement.~Another strategy is to measure in the $\mathbbm{Y}$ basis instead of $\mathbbm{X}$, or to use a combination of these two strategies assisted with shared randomness. \newline

\begin{figure}[t]
  \includegraphics[width=84mm]{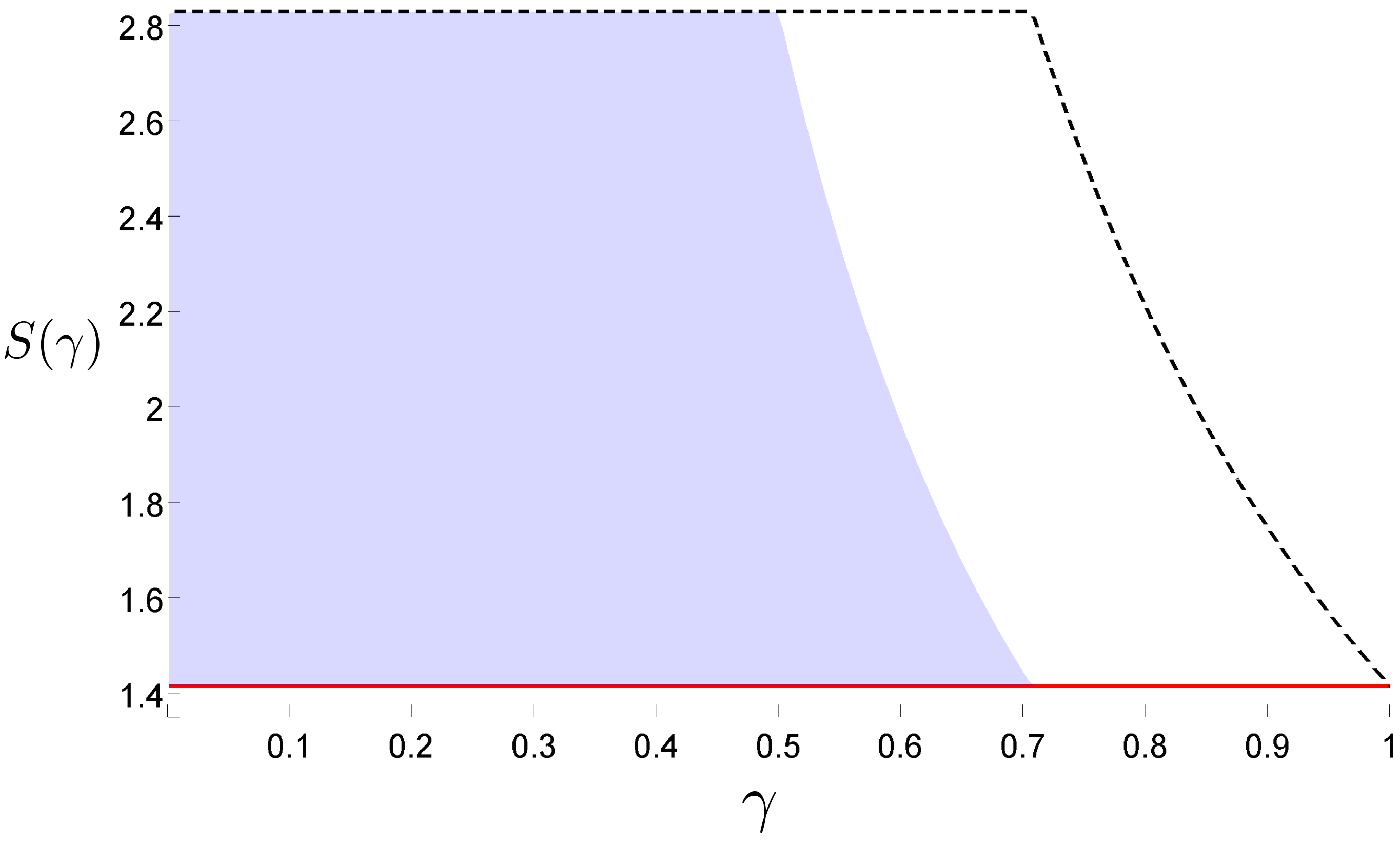}
  \caption{\textbf{Quantum violation vs efficiency.}~The vertical axis is the postselected Bell value $S(\gamma)$ and the horizontal axis is the measurement efficiency, $\gamma\in (0,1]$. The (black) dashed line is given by the maximum quantum bound, Eq.~(\ref{Eq5_quantumbound}), which is obtained using general two-qubit measurements. The (red) solid line is the postselected local bound, Eq.~(\ref{Eq6_PPTbound}). The (blue) shaded area is due to the pretty good quantum strategy.  }\label{fig2}   
\end{figure} 

{\textbf{A pretty good quantum strategy.}}~As mentioned above, the optimal solutions to Eq.~(\ref{Eq5_quantumbound}) are given in terms of two-qubit POVMs and thus do not provide a clear exposition on the optimal quantum strategy (i.e.,~the optimal entangled bipartite state and local PQMs) needed to achieve the maximum quantum bound.~To this end, we provide an explicit quantum strategy that reaches the upper bound in the region of $0<\gamma<1/2$, i.e., see the shaded area in Fig.~(\ref{fig2}).~Again, we refer to the aforementioned encoding scheme, i.e., Eq.~(\ref{Eq7_diag_basis}).~The optimal joint target system is a  two-qubit maximally entangled state, $\phi_{\mathsf{A'B'}}=\proj{\Psi^+}$, and the optimal PQMs are
\begin{eqnarray*}
Q_0&=&\gamma_1\proj{\Psi^+}+\gamma_2(\proj{00}+\proj{11})\\
Q_1&=&\gamma_1\proj{\Psi^-}+\gamma_2(\proj{00}+\proj{11})\\
Q_\varnothing&=&\mathds{1}_{\mathsf{A} \otimes \mathsf{B}} - Q_0-Q_1,
\end{eqnarray*} 
and likewise $R_i=Q_i$ for $i=0,1,\varnothing$, where $\gamma_1=\min\{2\gamma,1\}$ and $\gamma_2=\max\{\gamma-1/2,0\}$.~Note that the PQMs are inefficient Bell-state measurements (BSMs), i.e., they can only discriminate between $\ket{\Psi^+}$ and $\ket{\Psi^-}$.~The Bell values using these states and measurements are $S(\gamma)=2\sqrt{2}$ for $0<\gamma \leq 1/2$ and $S(\gamma)=1/(\gamma^2 \sqrt{2})$ for $1/2< \gamma <1/\sqrt{2}$.~We remark that this quantum strategy is however sub-optimal when it comes to detecting weakly entangled states.~For instance, in the case of two-qubit Werner states~\cite{Werner1989}, $\phi_{\mathsf{A'B'}}=F\proj{\Psi^-}+(1-F)\mathds{1}_{\mathsf{A'}\otimes \mathsf{B'}}/4$, it can be shown that violation is obtained only for $F>1/2$; note that these Werner states are separable for $F \leq 1/3$. On the other hand, we have upper bounds on the achievable Bell violations for $F>1/3$, which suggest that Eq.~(\ref{Eq2_Ieq_LHS}) might be able to detect all entangled two-qubit Werner states; see Supplementary Material.  \\

{\textbf{Discussion.}}~A way to interpret our result is to examine the optimality conditions for discriminating $\rho_0$ and $\rho_1$.~To begin with, we remind that these mixed states share the same support and can be simultaneously diagonalized in the Bell basis.~The first point implies that unambiguous state discrimination~\cite{Barnett2009} is not possible, thus the best measurement scheme, for our purpose, is probabilistic minimum-error state discrimination~\cite{Bagan2012,Herzog2012}.~From the optimality conditions of this scheme, it can be easily verified that the maximum success probability for which $\rho_0$ and $\rho_1$ are optimally discriminated is 1/2, which is indeed the value given in Eq.~(\ref{Eq5_quantumbound}).~This also explains the trend seen in Fig.~(\ref{fig2}) wherein higher Bell violations are achieved with higher inconclusive rates/lower measurement efficiencies.

From the second point, it is clear that the optimal measurement that discriminates between $\rho_0$ and $\rho_1$ must consists of entangled POVMs: the positive and negative eigenspaces of~$\rho_0-\rho_1$ are maximally entangled subspaces.~This means that no LOSR measurement (or more generally, separable measurement) can coherently access these entangled eigenspaces.~Crucially, this limitation also holds in the presence of inconclusive outcomes, i.e., entanglement cannot be created using local operations and classical communication (local filtering with shared randomness in our case).\\

{\textbf{Conclusion.}}~In the above, we have provided a semi-quantum Bell experiment that safely allows for postselection and is defined by a loss-independent local bound that is violated only in the region of imperfect measurement efficiencies.~On the conceptual level, our result suggests that semi-quantum nonlocality is much more powerful than previously recognized.~For instance, Eq.~(\ref{Eq2_Ieq_LHS}) does not require the so-called \emph{fair-sampling condition}~\cite{CHSH1969,Berry2010}, which is typically assumed in standard Bell experiments involving postselection to ensure that the conclusive/detected events are representative of the underlying quantum system.~Most interestingly, Fig.~(\ref{fig2}) shows that in order to (optimally) violate Eq.~(\ref{Eq2_Ieq_LHS}), it is necessary to use highly inefficient measurements, which up to the best of our knowledge, is the first time that such a trend has been found.~Furthermore, the maximal quantum violation $2\sqrt{2}$ can be achieved for a continuum of measurement efficiencies, i.e., $\gamma \in (0, 1/2]$, unlike standard Bell inequalities which can only reach their maximum violations in the limit of perfect measurement efficiency. 

Finally, we remark that on the practical side, our inequality provides a semi-device-independent method for testing entanglement in~\emph{detected quantum systems}.~That is, as mentioned above, the inequality allows one to restrict the analysis to detected events without assuming the fair-sampling condition.~For example, this application could be useful for entanglement-based experiments suffering from high detection losses, e.g., those based on practical entangled photon-pair sources~\cite{Valentina2015}. \newline

{\textbf{Acknowledgements.}}~We thank J.-D. Bancal, A. Martin,~V. Scarani, D. Rosset, N. Gisin, H.-K. Lo, R. Thew, B.~Qi, W. Grice, N. Johnston and A. Cosentino  for helpful discussions.~This work was performed at Oak Ridge National Laboratory, operated by UT-Battelle for the~U.S.~Department of Energy under Contract~No.~DE-AC05-00OR22725.~The author acknowledges support from the laboratory directed research and development program.

\appendix

\section{Technical results}~In order for us to provide a more precise description of our semidefinite programs, we would need to introduce a few mathematical notations; some of which may be different from those used in the main text.~We let Alice's and Bob's complex Hilbert spaces be denoted by $\mathcal{A}$ and $\mathcal{B}$, respectively.~The set of linear operators, Hermitian operators and positive semidefinite operators acting on the composite Hilbert space are written as $\tn{L}(\mathcal{A}\otimes \mathcal{B})$, $\tn{Herm}(\mathcal{A}\otimes \mathcal{B})$ and $\tn{Pos}(\mathcal{A}\otimes \mathcal{B})$, respectively. Furthermore, we write $Q \succeq 0$ to indicate that $Q$ is positive semidefinite.~The set of density operators acting on Alice's and Bob's systems is defined as $\tn{D}(\mathcal{A} \otimes \mathcal{B}):=\{\rho \in \tn{Pos}(\mathcal{A}\otimes \mathcal{B}) : \Tr[ \rho ]=1\}$.~The set of separable operators, which is a closed convex cone, is denoted by $\tn{Sep}(\mathcal{A}:\mathcal{B})$.~Additionally, we would require the partial transpose operation, $T_\mathcal{B}=\mathbb{I}_{\tn{L}(\mathcal{A})} \otimes T$, which performs the transpose operation, $T$, on Bob's Hilbert space.~Accordingly, the set of positive partial transpose (PPT) operators is defined as $\tn{PPT}(\mathcal{A} : \mathcal{B}):=\{Q:T_\mathcal{B}(Q) \succeq 0, Q \in \tn{Pos}(\mathcal{A}\otimes \mathcal{B}) \}$.~Also, we denote a diagonal matrix by $Q=\tn{diag}[\lambda_1, \lambda_2, \lambda_3,\lambda_4]$.

\subsection{Accessible entanglement in separable states}~Let us first point out a key observation that explains why global measurements are more predictive than separable measurements for our choice of quantum input states.~Recall that the goal is to (locally) discriminate between two mixed separable states, namely,
 \[\rho_{0}=\frac{1}{8}\!\!\!\!\!\!\sum_{\substack{\bar{x},\bar{y}\\\tn{s.t.}f(\bar{x},\bar{y})=0}}\!\!\!\!\!\!\omega_{\bar{x}}\otimes \tau_{\bar{y}},\quad \rho_{1}=\frac{1}{8}\!\!\!\!\!\!\sum_{\substack{\bar{x},\bar{y}\\\tn{s.t.}f(\bar{x},\bar{y})=1}}\!\!\!\!\!\!\omega_{\bar{x}}\otimes\tau_{\bar{y}}. \] where $\omega_{\bar{x}}$ and $\tau_{\bar{y}}$ are defined as $\omega_{\bar{x}}=H^{x_1}\proj{x_2}H^{x_1}$ for $\bar{x}=x_1x_2 \in \{0,1\}^2$ and $\tau_{\bar{y}}=H^{y_1}\proj{y_2}H^{y_1}$ for $\bar{y}=y_1y_2 \in \{0,1\}^2$, respectively.~An important feature of these mixed states is that they can be simultaneously diagonalized in an entangled eigenbasis, whose eigenvectors are given by entangled states.~That is, using the standard Bell state definitions, i.e., $\ket{\Phi^\pm}=(\ket{00}\pm\ket{11})/\sqrt{2}$ and $\ket{\Psi^\pm}=(\ket{01}\pm\ket{10})/\sqrt{2}$, we have

 \be 
 \label{Eq7_mainresult:diag_basis} \rho_0=   \begin{pmatrix}
    \lambda^+ & 0 & 0 & 0 \\
    0 &\lambda^- & 0 & 0 \\
    0 & 0& \frac{1}{4}& 0 \\
    0 & 0 & 0 & \frac{1}{4} 
  \end{pmatrix}, \quad \rho_1=   \begin{pmatrix}
    \lambda^- & 0 & 0 & 0 \\
    0 & \lambda^+ & 0 & 0 \\
    0 & 0& \frac{1}{4}& 0 \\
    0 & 0 & 0 & \frac{1}{4}
  \end{pmatrix},
 \ee where the eigenvalues are $\lambda^\pm=(1\pm1/\sqrt{2})/4$, and the corresponding eigenvectors given as $\ket{\phi_1}=\sqrt{2\lambda^+}\ket{\Phi^-}+\sqrt{2\lambda^-}\ket{\Psi^+}$, $\ket{\phi_2}=\sqrt{2\lambda^-}\ket{\Phi^-}-\sqrt{2\lambda^+}\ket{\Psi^+}$, $\ket{\phi_3}=\ket{\Phi^+}$ and $\ket{\phi_4}=\ket{\Psi^-}$.~For example, we have $\lambda^+=\bra{\phi_1} \rho_0 \ket{\phi_1}=\bra{\phi_2} \rho_1 \ket{\phi_2}$.~From equation~(\ref{Eq7_mainresult:diag_basis}), we immediately see that a good guess for the optimal global measurement strategy is to first project the unknown state onto the subspace $\proj{\phi_1}+\proj{\phi_2}$, and then discriminate between the two orthogonal maximally entangled states $\ket{\phi_1}$ and $\ket{\phi_2}$ (i.e., to pick the maximum eigenvalue).~Indeed, this gives us a conclusive rate of $\lambda^+ + \lambda^-=1/2$, which agrees with our optimal solution found using semidefinite programming.~If we were to use separable measurements, then it is clear that some amount of mixing between the eigenvalues would occur, thus leading to a guessing value that is less than the maximum eigenvalue. 
\subsection{Optimal guessing probabilities}{\label{A_2}}~As mentioned in the main text, the bounds for general and PPT measurements can be analytically solved using convex optimization techniques, namely, semidefinite programming~\cite{SDP1996}.~More specifically, the idea is to find feasible solutions for the primal and dual programs, which provide lower and upper bounds on the optimal value, i.e., by virtue of the weak duality principle.~If the feasible solutions lead to values that coincide, then we say that the optimal solution for the semidefinite program is found.~That is, by the strong duality principle, the duality gap is zero.~In fact, the considered semidefinite programs have zero duality gaps.

Recall that the generic guessing probability defined in the state discrimination game for a fixed conclusive rate $\gamma^2$ is given as
\begin{eqnarray} \nonumber
G(\gamma)&:=&\frac{\Tr\left[\frac{1}{2}\rho_0\Pi_{a \oplus b =0}+\frac{1}{2}\rho_1\Pi_{a \oplus b =1} \right]}{\gamma^2}\\&=&\frac{\Pr\left[a\oplus b = f(\bar{x},\bar{y}) \right]}{\gamma^2}=\frac{1}{2}+\frac{S(\gamma)}{8},  \end{eqnarray}
where we used the measurement assignments $\Pi_{a\oplus b=0}=M_{0,0}+M_{1,1}$, $\Pi_{a\oplus b=1}=M_{0,1}+M_{0,1}$ and $\Pi_{\varnothing}=\mathds{1}-\Pi_{a\oplus b=0}-\Pi_{a\oplus b=1}$. Here, we remind that the measurement $\{M_{a,b}\}_{a,b}$ for all $a,b=0,1,\varnothing$ can be either a PPT measurement or a general global measurement, depending on which bound we want to solve. In the following, we will first show the computation for general global measurements.

\begin{thm}{\tn{\textbf{(Optimal guessing probability for general measurements).}}~The maximum probability of discriminating $\rho_0$ and $\rho_1$ using measurements $\{\Pi_0,\Pi_1,\Pi_\varnothing \} \in \tn{Pos}(\mathcal{A} \otimes \mathcal{B}) $ with a fixed conclusive rate of $\gamma^2 \in (0,1]$ is  }
\be \label{Supp_thm1}
\tn{max}\, G(\gamma) = \left\{\begin{array}{lll}
             \frac{1}{2}\left(1+\frac{1}{\gamma^22\sqrt{2}}\right)\quad &\tn{if}&\quad \gamma>\frac{1}{\sqrt{2}}\\
             \frac{1}{2}+\frac{1}{2\sqrt{2}} \quad &\tn{if}&\quad \gamma\leq\frac{1}{\sqrt{2}}
            \end{array}\right.,
\ee
\end{thm}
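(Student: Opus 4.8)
The plan is to treat this as a semidefinite program and to certify its value by exhibiting a feasible primal point and a feasible dual point with equal objective, so that weak duality forces both to be optimal. Throughout I would exploit the structure visible in Eq.~(\ref{Eq7_mainresult:diag_basis}): the difference $\rho_0-\rho_1$ is supported only on the two-dimensional maximally entangled subspace $\mathrm{span}\{\ket{\phi_1},\ket{\phi_2}\}$, where its eigenvalues are $\pm(\lambda^+-\lambda^-)=\pm1/(2\sqrt2)$, whereas $\rho_0=\rho_1$ on the complementary subspace $\mathrm{span}\{\ket{\Phi^+},\ket{\Psi^-}\}$. Two further identities are central: the average input is maximally mixed, $(\rho_0+\rho_1)/2=\mathds{1}/4$, and $\sum_{\bar x,\bar y}\omega_{\bar x}\otimes\tau_{\bar y}=4\,\mathds{1}$.

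For the upper bound I would form the Lagrange dual, introducing a Hermitian multiplier $Y$ for the completeness constraint and scalars $\{q_{\bar x,\bar y}\}$ for the sixteen conclusive-rate constraints; maximizing the Lagrangian over $\Pi_0,\Pi_1,\Pi_\varnothing\succeq0$ gives the dual program of minimizing $\Tr[Y]+(1-\gamma^2)\sum_{\bar x,\bar y}q_{\bar x,\bar y}$ subject to $Y\succeq\tfrac12\rho_0$, $Y\succeq\tfrac12\rho_1$ and $Y+\sum_{\bar x,\bar y}q_{\bar x,\bar y}\,\omega_{\bar x}\otimes\tau_{\bar y}\succeq0$. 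It suffices to produce one feasible point: choosing $q_{\bar x,\bar y}\equiv q$ turns the last constraint into $Y+4q\,\mathds{1}\succeq0$ via the identity above, and taking $Y$ diagonal in the eigenbasis of Eq.~(\ref{Eq7_mainresult:diag_basis}) reduces the matrix inequalities to $y_1,y_2\ge\lambda^+/2$, $y_3,y_4\ge1/8$ and $y_i\ge-4q$. Setting $y_1=y_2=\lambda^+/2$, $y_3=y_4=t$ and $q=-t/4$ collapses the objective to $\lambda^++t\,(4\gamma^2-2)$ over $t\in[1/8,\lambda^+/2]$, whose slope vanishes precisely at $\gamma^2=1/2$; the minimizer is $t=1/8$ when $\gamma>1/\sqrt2$ and $t=\lambda^+/2$ when $\gamma\le1/\sqrt2$, yielding dual values $\tfrac{1}{4\sqrt2}+\tfrac{\gamma^2}{2}$ and $2\lambda^+\gamma^2$. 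Dividing by $\gamma^2$ reproduces the two branches of Eq.~(\ref{Supp_thm1}).

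For the matching lower bound I would exhibit a primal POVM that projects onto the entangled discriminating subspace and assigns $\ket{\phi_1},\ket{\phi_2}$ to outcomes $0,1$, rescaled by the conclusive weight and, when $\gamma>1/\sqrt2$, supplemented by a conclusive component on the complementary subspace (where $\rho_0=\rho_1$ forces success equal to exactly half of its weight). Evaluating $\tfrac12\Tr[\rho_0\Pi_0+\rho_1\Pi_1]$ then gives $2\lambda^+\gamma^2$ and $\tfrac{1}{4\sqrt2}+\tfrac{\gamma^2}{2}$ in the two regimes, matching the dual values, so weak duality closes the argument. I expect the main obstacle to be the sixteen \emph{per-input} equalities $\Tr[(\omega_{\bar x}\otimes\tau_{\bar y})\Pi_\varnothing]=1-\gamma^2$, which must hold simultaneously and are \emph{not} satisfied by the naive projective $\Pi_\varnothing$, since the overlap $\Tr[(\omega_{\bar x}\otimes\tau_{\bar y})P]$ with the discriminating projector $P$ varies with the inputs. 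I would resolve this by symmetrizing the measurement over the symmetry group of the input ensemble: averaging preserves the objective by linearity and feasibility by convexity while rendering $\Pi_\varnothing$ group invariant, so that---because the uniform average of the inputs is $\mathds{1}/4$---the sixteen conditions collapse to the single trace constraint $\Tr[\Pi_\varnothing]=4(1-\gamma^2)$. Checking that this group acts transitively enough on the inputs to effect that collapse is the one step I would treat most carefully.
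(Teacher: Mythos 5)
Your strategy is the same as the paper's: certify the SDP value by exhibiting primal and dual feasible points with matching objectives. Your dual argument is correct and in fact a little cleaner than the paper's, which treats the two regimes with two separately guessed certificates; your one-parameter family $y_1=y_2=\lambda^+/2$, $y_3=y_4=t$, $q=-t/4$ with objective $\lambda^++t(4\gamma^2-2)$ reproduces both of the paper's dual points (at $t=\lambda^+/2$ and $t=1/8$) and makes the transition at $\gamma^2=1/2$ transparent. (Restricting to equal multipliers $q_{\bar x,\bar y}\equiv q$ only shrinks the dual feasible set, so the resulting upper bound is still valid --- this is exactly the restriction implicit in the paper's dual, where a single scalar appears.)

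The one place you go astray is the ``main obstacle'' you identify on the primal side. The overlap of the discriminating projector with the inputs does \emph{not} vary: writing $P=\proj{\phi_1}+\proj{\phi_2}=\proj{\Phi^-}+\proj{\Psi^+}=\tfrac12(\mathds{1}\otimes\mathds{1}+\mathbbm{Y}\otimes\mathbbm{Y})$ and noting that every input $\omega_{\bar x}=H^{x_1}\proj{x_2}H^{x_1}$ (and likewise $\tau_{\bar y}$) is a real state with $\Tr[\omega_{\bar x}\mathbbm{Y}]=0$, one gets $\Tr[(\omega_{\bar x}\otimes\tau_{\bar y})P]=\tfrac12$ for all sixteen inputs. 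Hence the naive constructions $\Pi_\varnothing=\mathds{1}-2\gamma^2P$ (for $\gamma\le1/\sqrt2$) and $\Pi_\varnothing=2(1-\gamma^2)(\mathds{1}-P)$ (for $\gamma>1/\sqrt2$) already satisfy all sixteen per-input equalities exactly, which is precisely what the paper's diagonal primal solutions do; no symmetrization is needed. Had it been needed, your proposed fix would require more care than you allow for: averaging over the full symmetry group of the input \emph{set} does not preserve the objective, because a local unitary permuting the inputs generally maps the partition $\{f=0\}/\{f=1\}$ to a nontrivial mixture (e.g.\ $H$ on Alice's side alone sends $f\mapsto f\oplus y_1$). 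One must restrict to group elements that shift $f$ by a global constant and compensate by swapping the outcome labels $0\leftrightarrow1$ --- this is exactly the ``transitively enough'' check you defer, and it is the part that could silently fail if done casually. Since the direct construction is already feasible, I would simply delete that detour and verify the constant-overlap identity instead.
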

\begin{proof} The optimal solution is obtained if the feasible solutions for the primal and dual programs lead to a common optimization value.~To this end, the primal program for general measurements under the constraint that the conclusive rate is fixed to $\gamma^2 \in (0,1]$ is \\

\noindent 
\underline{{Primal program (general)}}
\begin{eqnarray*}\nonumber
\texttt{maximize}&:& \frac{1}{2}\Tr \left[\rho_0  \Pi_{a\oplus b=0}  + \rho_1 \Pi_{a\oplus b=1} \right], \\ \nonumber
\texttt{subject to}&:&  \Pi_{a\oplus b=0}+ \Pi_{a\oplus b=1}+ \Pi_{\varnothing} = \mathds{1}_{\mathcal{A} \otimes \mathcal{B}}\\  \nonumber
&&  \Tr\left[ (\omega_{\bar{x}}\otimes \tau_{\bar{y}}\Pi_{\varnothing}\right] =1-\gamma^2,\quad \forall~\bar{x},\bar{y}\\
&& \Pi_i \in \tn{Pos}(\mathcal{A} \otimes \mathcal{B}),\quad i=0,1,\varnothing,
\end{eqnarray*}
and the corresponding dual program is found to be \newline

\noindent
\underline{{Dual program (general)}}
\begin{eqnarray*}\nonumber
\texttt{minimize}&:& \Tr\left[Y \right] - (1-\gamma^2)\gamma\\ \nonumber
\texttt{subject to}&:& 2Y - \rho_i \succeq 0,\quad i=0,1\\  \nonumber
&& 4Y - \gamma\mathds{1}_{\tn{L}(\mathcal{A}\otimes \mathcal{B})} \succeq 0 \\ \nonumber
&& Y \in \tn{Herm}(\mathcal{A} \otimes \mathcal{B})\\
&& \gamma \in \mathbb{R}.
\end{eqnarray*}
For the region $0<\gamma \leq 1/\sqrt{2}$, we use the observations from the preceding section to construct a feasible solution for the primal program that is diagonal in the basis of equation~(\ref{Eq7_mainresult:diag_basis}), i.e., \begin{eqnarray*}
\tilde{\Pi}_{a\oplus b=0}&=&\tn{diag}\left[2\gamma^2, 0, 0 ,0 \right], \\ \tilde{\Pi}_{a\oplus b=1}&=&\tn{diag}\left[0, 2\gamma^2,0,0\right],\\ \tilde{\Pi}_{\varnothing}&=&\tn{diag}\left[1-2\gamma^2, 1-2\gamma^2,1,1 \right].
\end{eqnarray*} A direct computation of the primal objective function using this solution gives $\tn{max}\, \gamma^2{G}(\gamma) \geq \gamma^2\tilde{G}(\gamma) =\gamma^2\left(1+1/\sqrt{2}\right)/2$. A feasible solution for the dual in the same region is 
\begin{eqnarray*}
\tilde{Y}=\frac{1}{8}\left(1+\frac{1}{\sqrt{2}}\right)\mathbb{I}_{\tn{L}(\mathcal{A}\otimes \mathcal{B})},\quad \tilde{\gamma}=\frac{1}{2}\left(1+\frac{1}{\sqrt{2}}\right), 
\end{eqnarray*}which gives $\tn{max}\, \gamma^2{G}(\gamma) \leq \gamma^2\tilde{G}(\gamma) =\gamma^2\left(1+1/\sqrt{2}\right)/2$. Therefore, we arrive at the optimal solution (i.e., equation~(\ref{Supp_thm1})) for the region $0<\gamma\leq 1/\sqrt{2}$. 

For the other half of the region, $1/\sqrt{2} < \gamma \leq 1$,~a feasible solution for the primal program is 
\begin{eqnarray*}
\tilde{\Pi}_{a\oplus b=0}&=&\tn{diag}\left[1,\, \gamma^2-\frac{1}{2},\, \gamma^2-\frac{1}{2},0 \right], \\ \tilde{\Pi}_{a\oplus b=1}&=&\tn{diag}\left[0, 1, \,\,\gamma^2-\frac{1}{2},\gamma^2-\frac{1}{2} \right],\\ \tilde{\Pi}_{\varnothing}&=&\tn{diag}\left[0, 0,2(1-\gamma^2),2(1-\gamma^2) \right],
\end{eqnarray*} which leads to a lower bound of $\tn{max}\, \gamma^2{G}(\gamma) \geq \gamma^2\tilde{G}(\gamma) =(2\gamma^2+1/\sqrt{2})/4$.~A feasible solution for the dual program is 
\begin{eqnarray*}
\tilde{Y}=\begin{bmatrix}
    \mu_1 & 0 & 0 & -\mu_2 \\
    0 & \mu_1 & \mu_2 & 0 \\
    0 & \mu_2 & \mu_1 & 0 \\
    -\mu_2 & 0 & 0 & \mu_1
  \end{bmatrix},\quad \tilde{\gamma}=\frac{1}{2}, \end{eqnarray*}
  where $\mu_1=(2+1/\sqrt{2})/16$ and $\mu_2=(1+1/\sqrt{2})/8-\lambda_1$.~Plugging these into the dual objective gives $\tn{max}\, \gamma^2{G}(\gamma) \leq \gamma^2\tilde{G}(\gamma) =\left(2\gamma^2+1/\sqrt{2}\right)/4$.~Combining the lower and upper bounds, we thus get the other half of equation~(\ref{Supp_thm1}), that is,  $\tn{max}\, {G}(\gamma) =\left(1+1/(\gamma^2 2\sqrt{2})\right)/2$ for $\gamma > 1/\sqrt{2}$. \end{proof}

The upper bound for the local operations and shared randomness (LOSR) bound is computed using PPT measurements, which admit a concise mathematical characterization. Furthermore, under the assumption of two-qubit measurements, PPT measurements are necessarily separable measurements, since for any linear operator $Q\in \tn{L}(\mathcal{A}\otimes \mathcal{B})$, it is separable if and only if it is PPT~\cite{HorodeckiPPT}.~That is, the set of separable operators and the set of PPT operators are equivalent up to some constant factor for two-qubit positive operators.~Note that this is generally not the case if we consider higher dimension Hilbert spaces where PPT does not imply separability. In the below, we first show the optimal guessing probability assuming PPT measurements. 

\begin{thm}{\tn{\textbf{(Optimal guessing probability for PPT measurements).}}~The maximum probability of discriminating $\rho_0$ and $\rho_1$ using measurements $\{\Pi_0,\Pi_1,\Pi_\varnothing \} \in \tn{PPT}(\mathcal{A} \otimes \mathcal{B}) $ for any conclusive rate $\gamma^2 \in (0,1]$ is  }
\be \label{Supp_thm2}
\tn{max}\, G( \cdot|\tn{PPT}) =\frac{1}{2}+\frac{1}{4\sqrt{2}}.
\ee
\end{thm}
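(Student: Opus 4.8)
The plan is to repeat the primal--dual semidefinite-programming argument used for the previous theorem (general measurements), now inserting the positive-partial-transpose condition as extra constraints. First I would write the primal program exactly as the general one---maximize $\tfrac12\Tr[\rho_0\Pi_0+\rho_1\Pi_1]$ subject to $\Pi_0+\Pi_1+\Pi_\varnothing=\mathds{1}_{\mathcal{A}\otimes\mathcal{B}}$, $\Tr[(\omega_{\bar{x}}\otimes\tau_{\bar{y}})\Pi_\varnothing]=1-\gamma^2$ for all $\bar{x},\bar{y}$, and $\Pi_i\succeq0$---but with the additional requirement $\Pi_i^{T_\mathcal{B}}\succeq0$ for $i=0,1,\varnothing$, which for two qubits is equivalent to separability. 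Dualizing via the Lagrangian, each such constraint contributes a new positive-semidefinite dual variable $Z_i\succeq0$ that enters through its partial transpose (using that $T_\mathcal{B}$ is self-adjoint under the trace). The dual then keeps the same shape as before---minimize $\Tr[Y]-(1-\gamma^2)\gamma'$ over a Hermitian $Y$ and a scalar $\gamma'$---except that the operator inequalities become $2Y-\rho_i-2Z_i^{T_\mathcal{B}}\succeq0$ for $i=0,1$, together with the inconclusive-outcome inequality modified by $Z_\varnothing^{T_\mathcal{B}}$ and the new constraints $Z_i\succeq0$.

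For the lower bound I would introduce nothing new: the explicit local strategy displayed in the main text below Eq.~(\ref{Eq6_PPTbound})---each device measures in the $\mathbbm{X}$ basis with probability $\gamma$ and otherwise outputs $\varnothing$---is manifestly local, hence separable, hence PPT. Evaluating its objective on the Bell-diagonal states of Eq.~(\ref{Eq7_mainresult:diag_basis}) yields exactly $G=\tfrac12+\tfrac{1}{4\sqrt2}$, and crucially this value does not depend on $\gamma$ because the conclusive rate $\gamma^2$ factors out of both numerator and denominator. This already gives $\tn{max}\,G(\cdot|\tn{PPT})\ge\tfrac12+\tfrac{1}{4\sqrt2}$ on the primal side.

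The substance is the matching dual upper bound. Guided by the preceding subsection---that the eigenspaces on which $\rho_0$ and $\rho_1$ differ are spanned by the maximally entangled vectors $\ket{\phi_1},\ket{\phi_2}$---I would try the $\gamma$-independent point $Y=\tfrac14(\tfrac12+\tfrac{1}{4\sqrt2})\mathds{1}$ and $\gamma'=\tfrac12+\tfrac{1}{4\sqrt2}$, which saturates $4Y-\gamma'\mathds{1}\succeq0$ and makes the dual objective equal $\gamma^2(\tfrac12+\tfrac{1}{4\sqrt2})$, i.e.\ precisely $\gamma^2 G$ with $G$ constant. With this choice $2Y-\rho_0$ is diagonal in the Bell basis with a single negative eigenvalue $-\tfrac{1}{8\sqrt2}$ sitting on the entangled vector $\ket{\phi_1}$; the entire role of $Z_0$ (and symmetrically $Z_1$) is to repair this negative eigenvalue through $-2Z_0^{T_\mathcal{B}}$ while keeping $Z_0\succeq0$. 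This is possible precisely because the partial transpose of the maximally entangled projector $\proj{\phi_1}$ carries a negative eigenvalue of magnitude $\tfrac12$, so a positive $Z_0$ can cancel the offending direction---but only up to a budget that is halved relative to what a coherent global measurement enjoys, which is exactly what converts the general advantage $\tfrac{1}{2\sqrt2}$ into the PPT advantage $\tfrac{1}{4\sqrt2}$.

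I expect the main obstacle to be the explicit construction and verification of $Z_0,Z_1$: one must write them in closed form (most naturally supported on the entangled eigenspace), check that $Z_i\succeq0$, and simultaneously check that $2Y-\rho_i-2Z_i^{T_\mathcal{B}}\succeq0$ on the full four-dimensional space. Because partial transposition interchanges the $\ket{\Psi^\pm}$ and $\ket{\Phi^\pm}$ sectors, the term $Z_i^{T_\mathcal{B}}$ generically produces off-diagonal (Bell-coherent) contributions that must be controlled, and it may well force $Y$ itself to acquire an off-diagonal correction of the same $\mu_1,\mu_2$ form used in the $\gamma>1/\sqrt2$ branch of the previous theorem rather than staying proportional to the identity. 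A secondary but conceptually decisive point is to confirm that a single dual point is feasible for every $\gamma\in(0,1]$, so that efficiency enters the objective only through the explicit factor $(1-\gamma^2)\gamma'$; this is the structural reason $\tn{max}\,G(\cdot|\tn{PPT})$ is loss-independent. Once primal feasibility (achievability) and dual feasibility (with matching value) are established, weak duality closes the gap and strong duality---the zero duality gap asserted for these programs---certifies $\tn{max}\,G(\cdot|\tn{PPT})=\tfrac12+\tfrac{1}{4\sqrt2}$, proving Eq.~(\ref{Supp_thm2}).
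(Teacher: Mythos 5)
Your proposal is correct and follows essentially the same primal--dual route as the paper's proof: the paper's dual has exactly the form you derive (positive operators $Q_i$ entering through $T_{\mathcal{B}}$), and its dual feasible point is precisely your guess $\tilde Y=\tfrac18\left(1+\tfrac{1}{2\sqrt2}\right)\mathds{1}$, $\tilde\gamma=\tfrac12\left(1+\tfrac{1}{2\sqrt2}\right)$, completed by the explicit choices $Q_0=\tfrac{1}{8\sqrt2}\proj{\phi_2}$, $Q_1=\tfrac{1}{8\sqrt2}\proj{\phi_1}$, $Q_2=0$, which repair the single negative eigenvalue $-\tfrac{1}{8\sqrt2}$ via the partial transpose exactly as you describe. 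The only difference is cosmetic and on the primal side: the paper exhibits a $\gamma$-scaled diagonal feasible point in the entangled eigenbasis of equation~(\ref{Eq7_mainresult:diag_basis}) rather than invoking the manifestly local $\mathbbm{X}$-basis strategy, but both give the same value $\gamma^2(2+1/\sqrt2)/4$.
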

\begin{proof}~The primal program for separable/PPT measurements is given as \\

 \noindent 
\underline{{Primal program (separable/PPT)}}
\begin{eqnarray*}\nonumber
\texttt{maximize}&:& \frac{1}{2}\Tr \left[\rho_0  \Pi_{a\oplus b=0}  + \rho_1 \Pi_{a\oplus b=1} \right] \\ \nonumber
\texttt{subject to}&:&  \Pi_{a\oplus b=0}+ \Pi_{a\oplus b=1}+ \Pi_{\varnothing} = \mathds{1}_{\mathcal{A} \otimes \mathcal{B}}\\  \nonumber
&& \Tr\left[ (\omega_{\bar{x}}\otimes \tau_{\bar{y}}\Pi_{\varnothing}\right] =1-\gamma^2,\quad \forall~\bar{x},\bar{y}\\
&& \Pi_i \in \tn{PPT}(\mathcal{A} :\mathcal{B}),\quad i=0,1,\varnothing,
\end{eqnarray*}and the corresponding dual program is \newline

\noindent
\underline{{Dual program (separable/PPT)}}
\begin{eqnarray*}\nonumber
\texttt{minimize}&:& \Tr\left[Y \right] - (1-\gamma^2)\gamma\\ \nonumber
\texttt{subject to}&:& 2\left(Y -T_{\mathcal{B}}(Q_i)\right)- \rho_i  \succeq 0,\quad i=0,1\\  \nonumber
&& 4\left(Y-T_{\mathcal{B}}(Q_2)\right) - \gamma\mathds{1}_{\tn{L}(\mathcal{A}\otimes \mathcal{B})} \succeq 0 \\ \nonumber
&& Y \in \tn{Herm}(\mathcal{A} \otimes \mathcal{B})\\
&& Q_i \in \tn{Pos}(\mathcal{A} \otimes \mathcal{B}),\quad i=0,1,2 \\
&& \gamma \in \mathbb{R}.
\end{eqnarray*} Similarly, we construct feasible solutions for the primal and dual programs and show that their optimization values are identical. For the primal program, a feasible solution is 
\begin{eqnarray*}
\tilde{\Pi}_{a\oplus b=0}&=&\tn{diag}\left[\gamma^2, 0, \frac{\gamma^2}{2}, \frac{\gamma^2}{2}\right], \\ \tilde{\Pi}_{a\oplus b=1}&=&\tn{diag}\left[0,\gamma^2, \frac{\gamma^2}{2},  \frac{\gamma^2}{2}\right],\\ \tilde{\Pi}_{\varnothing}&=&\tn{diag}\left[1-\gamma^2, 1-\gamma^2,1-\gamma^2,1-\gamma^2 \right],
\end{eqnarray*} where each element is diagonal in the basis of equation~(\ref{Eq7_mainresult:diag_basis}).~Using this solution, we get $\tn{max}\, \gamma^2{G}(\gamma|\tn{Sep}) \geq \gamma^2 \tilde{G}(\cdot|\tn{Sep}) =\gamma^2\left(2+1/\sqrt{2}\right)/4$. For the dual program, a feasible solution is
\begin{eqnarray*}
\tilde{Y}=\frac{1}{8}\left(1+\frac{1}{2\sqrt{2}}\right)\mathds{1}_{\tn{L}(\mathcal{A}\otimes \mathcal{B})},\quad \tilde{\gamma}=\frac{1}{2}\left(1+\frac{1}{2\sqrt{2}}\right),  \\
Q_0=\frac{1}{8\sqrt{2}}\proj{\phi_2}, \quad Q_1=\frac{1}{8\sqrt{2}}\proj{\phi_1},\quad Q_2=0_{\tn{L}(\mathcal{A}\otimes \mathcal{B})},
\end{eqnarray*} which gives $\tn{max}\, \gamma^2{G}(\gamma|\tn{Sep}) \leq \gamma^2\tilde{G}(\cdot|\tn{Sep}) =\gamma^2\left(2+1/\sqrt{2}\right)/4$.~Therefore, after normalization, the obtained upper and lower bounds give equation~(\ref{Supp_thm2}).
\end{proof}

\subsection{Possible detection of all entangled two-qubit Werner states}\label{A_3} ~The detection of entangled two-qubit Werner states~\cite{Werner1989} can be shown by explicitly modeling the measurement operators in terms of two local measurements and a two-qubit Werner state $\phi_\xi:=\xi \proj{\Psi^-}+(1-\xi)\mathds{1}/4$ defined in two auxiliary systems $\mathcal{A}'$ and $\mathcal{B}'$. More specifically, the resulting measurements on systems $\mathcal{A}$ and $\mathcal{B}$ are given as 
\be
M_{a,b}=\Tr_{\mathcal{A}'\mathcal{B}'}\left[\phi_\xi M_{a,b}^+ \right],
\ee
where $M_{a,b}^+ \in \tn{Sep}(\mathcal{A}\otimes \mathcal{A}':\mathcal{B}\otimes \mathcal{B}')$. Indeed, the resulting measurements $\{M_{a,b}\}_{a,b}$ can only be entangled only if the underlying Werner state is entangled.~We compute (upper) quantum bounds for this choice of modeling using the following semidefinite program assuming $\gamma^2=1/4$.\\

\noindent\underline{{Primal program (Werner states with fixed $\xi$)}}
\begin{eqnarray*}\nonumber
\texttt{maximize}&:& \frac{1}{2}\Tr \left[(\rho_0 \otimes \phi_\xi)  \Pi_{a\oplus b=0}  + (\rho_1\otimes \phi_\xi) \Pi_{a\oplus b=1} \right] \\ \nonumber
\texttt{subject to}&:&  \Pi_{a\oplus b=0}+ \Pi_{a\oplus b=1}+ \Pi_{\varnothing} = \mathds{1}_{\mathcal{A} \otimes \mathcal{A}' \otimes \mathcal{B} \otimes \mathcal{B}'}\\  \nonumber
&&\frac{1}{2}\tr\left[ (\rho_0+\rho_1)\otimes \phi_\xi  \Pi_{\varnothing}\right] =3/4,\quad i=0,1\\
&& \Pi_i \in \tn{PPT}(\mathcal{A}\otimes \mathcal{A}':\mathcal{B}\otimes \mathcal{B}'),\quad i=0,1,\varnothing.
\end{eqnarray*}
\begin{figure}[t]
  \includegraphics[width=90mm]{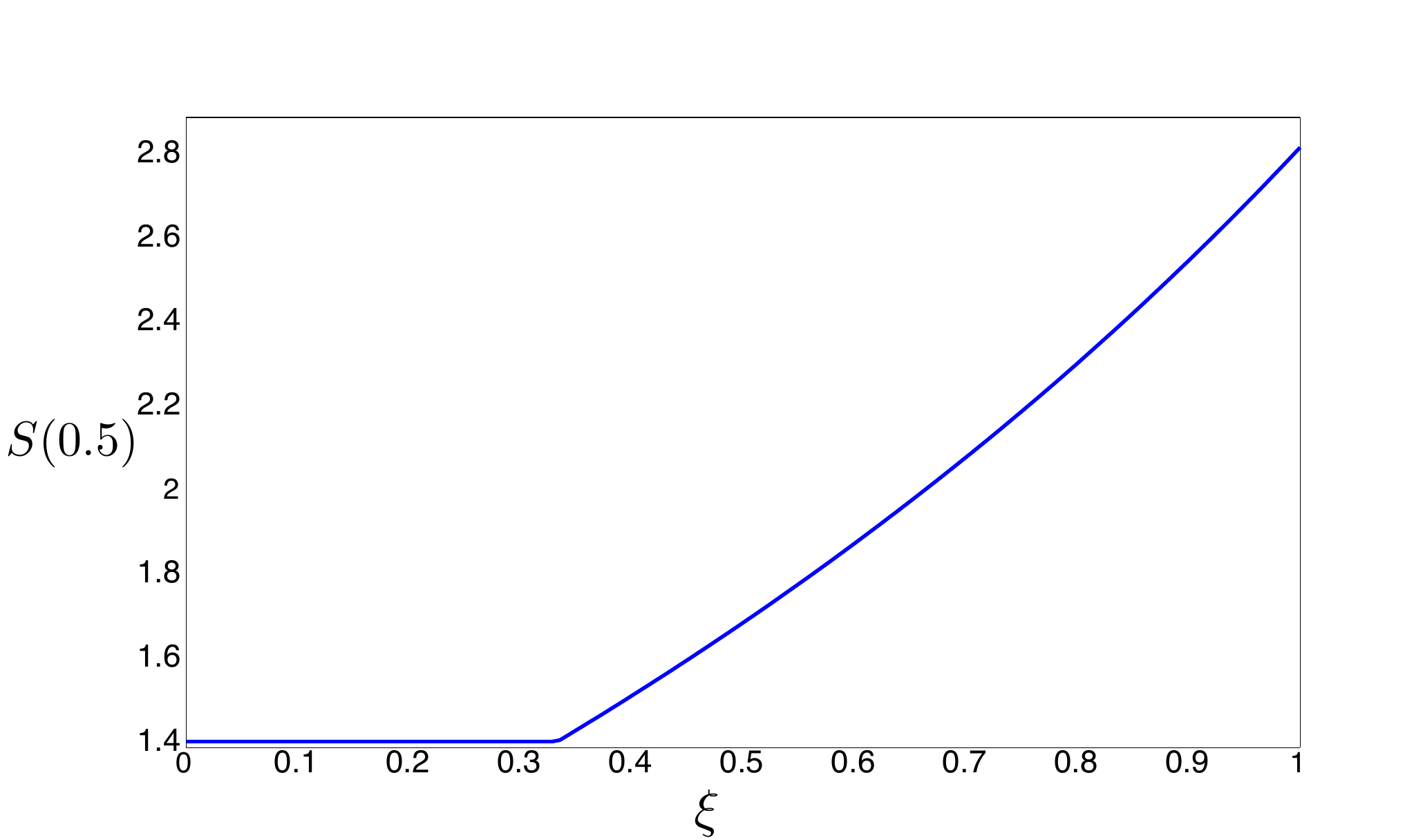}
  \caption{{\textbf{Quantum violation vs maximally entangled fraction.}~Here, we see that quantum violations are obtained only for $\xi > 1/3$, which means that the proposed semi-quantum CHSH inequality could be capable of detecting all entangled two-qubit Werner states. }~
}    \label{fig4}
\end{figure}
Here, the optimization of the semidefinite program is taken over measurements separable with respect to the bi-partition $\mathcal{A}\otimes \mathcal{A}':\mathcal{B}\otimes \mathcal{B}'$.

\bibliographystyle{unsrt}

\end{document}